%% file: SIM_Universal_v4_Submitted.tex
 \documentclass[conference]{IEEEtran}
\IEEEoverridecommandlockouts
\usepackage [latin1]{inputenc}

\usepackage{graphicx,cite,acronym,amsmath,amssymb,amsfonts,color,subfigure,floatflt,stfloats,bm,textgreek,multirow,amsthm}
\usepackage{epsfig,epstopdf,psfrag,booktabs}

\usepackage[table]{xcolor}
\usepackage{bm}

\usepackage[cmintegrals]{newtxmath}
\usepackage{graphics,hhline,array,cite,bbm}

\usepackage{latexsym}
\usepackage{times}
\usepackage{makeidx}

\usepackage{mathtools}
\usepackage{graphicx}
\usepackage{hyperref}
\usepackage{xcolor}
\usepackage{enumitem}
\usepackage[normalem]{ulem}

\DeclareGraphicsExtensions{.png,.jpg,.eps}

\input{acronyms}

\include{def_EM_SP}

\theoremstyle{remark}
\newtheorem{remark}{Remark} 

\begin{document}

\title{On the Universal Approximation Capability of Stacked Intelligent Metasurfaces}

\author{\IEEEauthorblockN{
Davide Dardari
}\\
\IEEEauthorblockA{DEI, Universit\`a di Bologna, 40136 Bologna, Italy, Email: davide.dardari@unibo.it}
\IEEEauthorblockA{National Laboratory of Wireless Communications (WiLab), CNIT, 40136 Bologna, Italy}\\
}


\maketitle

\begin{abstract}

Wave-domain processing is an emerging paradigm in which signal-processing operations are partially shifted from the digital domain to the electromagnetic (EM) domain to reduce complexity, energy consumption, and latency in next-generation wireless systems. Stacked intelligent metasurfaces (SIMs) are a promising technology for wave manipulation. Although numerous studies have addressed the numerical design of SIMs for specific tasks, no theoretical results have yet established whether a prescribed SIM architecture can approximate an arbitrary complex linear mapping, i.e., whether it possesses a universal approximation capability. To address this question, this paper establishes fundamental theorems that provide necessary and sufficient closed-form conditions under which a given SIM architecture can serve as a universal approximator of complex linear transformations.
 
\end{abstract}

\begin{IEEEkeywords}
Wave-domain processing, stacked intelligent metasurface, universal approximation, holographic MIMO.
\end{IEEEkeywords}

\section{Introduction}

It is widely recognized that satisfying the stringent communication and sensing requirements of next-generation wireless networks will require highly dense and/or large-scale massive \ac{MIMO} antenna systems \cite{BjoChaHeaMarMezSanRusCasJunDem:24,Pre:J24,YouCaiLiuDiRDumYen:25}. However, the deployment of massive antenna arrays entails significant challenges in terms of hardware complexity, processing latency, and energy consumption, thereby raising concerns about the sustainability of future wireless infrastructures.

A promising approach to mitigating these challenges is to distribute signal-processing functions across the digital, analog, and wave domains, leading to the so-called \emph{tri-hybrid} architecture \cite{CasYanChaHea:26}. In this framework, wave-domain processing performs part of the signal-processing operations directly at the \ac{EM} level, in accordance with the \ac{ESIT} paradigm \cite{DarTorPasDec:J26, ZhuWanDaiDebPoo:24}, enabled by recent advances in reconfigurable antenna technologies and metamaterials.

Among the wave-based processing technologies proposed to date, \acp{SIM} have attracted particular attention \cite{AnXuetAl:J23}.
%
%
A \ac{SIM} consists of multiple programmable layers separated by free-space or guided-wave propagation regions. Cascaded wave-matter interactions provide a multilayer analog processor in which the propagation operators are fixed by geometry and frequency, while the diagonal layer responses are programmable (typically through varactors, liquid crystals, or PIN diodes). 
The resulting multilayer cascade can realize substantially richer wave-domain transformations than a single layer and has motivated applications to analog beamforming, \ac{MIMO} precoding and combining, wideband processing, \ac{ISAC}, holographic \ac{MIMO}, and electromagnetic-domain computing~\cite{AnXuetAl:J23,Li:26,HasAnDiRDebYue:24,AnYueGuaDiRDebPooHan:24,AnXuetAl5:25,FabTorDar:C25}. 
A recent survey by Sheemar et al. consolidates modeling approaches (cascaded operators, multiport impedance descriptions, network representations), and applications to front-end MIMO processing, near-field and wideband transmission, learning-based control, and \ac{ISAC}~\cite{She:2026}.  

A central theoretical question, distinct from, and prior to, the question of {\em how} to tune a \ac{SIM} for a given task, is whether a prescribed \ac{SIM} architecture can implement, exactly or approximately, an {\em arbitrary} complex linear mapping between an $S$-dimensional input space (e.g., $S$ RF feeds or data streams) and an $M$-dimensional target space (e.g., the outputs of the last layer of a transmitting \ac{SIM}), that is, act as a \emph{linear universal approximator}. Existing works almost invariably optimize the meta-atom coefficients for one specified task, such as beamforming,  given a channel estimate, and \ac{MIMO} precoding, by using different methods, for instance, alternating optimization, projected gradient descent, or manifold methods~\cite{She:2026}. Such algorithms are indispensable in practice, but successful convergence for a handful of selected tasks does not, by itself, establish that the underlying architecture has sufficient intrinsic degrees of freedom to approximate the {\em entire} space of transfer matrices $\Complex^{M\times S}$: a poorly designed cascade (too few layers, too few elements, or a geometry that makes successive layers act as a single effective layer) may simply be incapable of realizing certain tasks, no matter how the optimizer is tuned, initialized, or how long it is run.

This paper addresses this question for the first time, to the best of the author's knowledge, through the derivation of the necessary conditions and sufficient conditions for a \ac{SIM} to act as a linear universal approximator, making use of algebraic topology arguments. For a given \ac{SIM} architecture in terms of propagation matrices, such closed-form conditions are algorithm-agnostic and can be checked once, off-line, from the propagation matrices alone, to certify the universality of the architecture and then license (or not license) any subsequent task-specific synthesis.

\begin{figure}[!t]
\centering\includegraphics[width=1\columnwidth]{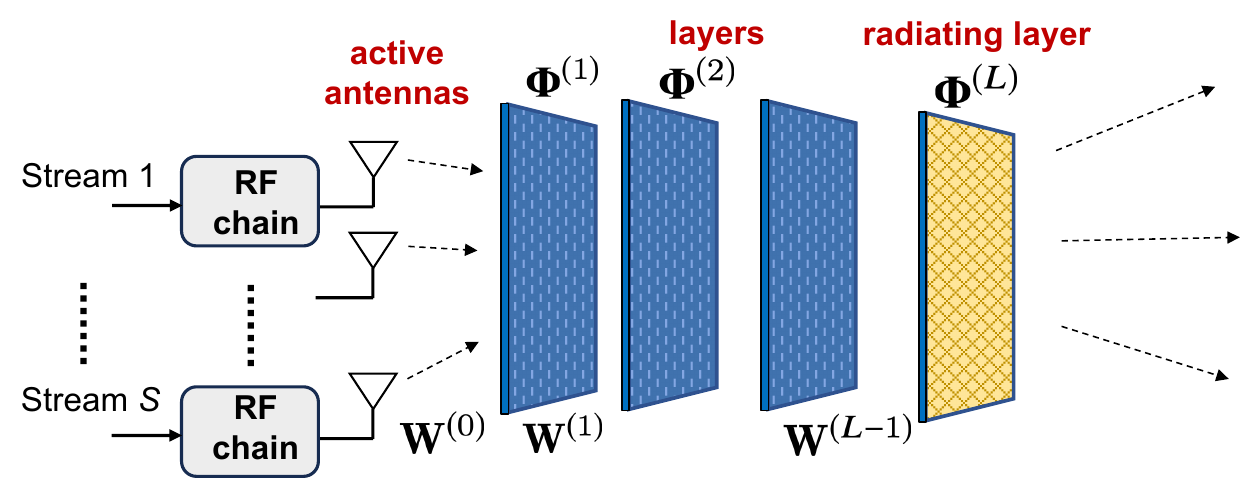}
\caption{Transmitting SIM architecture.} 
\label{Fig:SIM}
\end{figure}

\section{SIM Model and Problem Statement}
\label{Sec:Model}

\subsection{SIM model}
Consider a transmitting \ac{SIM} with $L$ programmable layers of $M$ reconfigurable meta-atoms each, preceded by $S$ RF feeds (data streams), as shown in Fig. \ref{Fig:SIM}.

Denote with $\boldW_{\ell} \in \Complex^{M \times M}$, for $\ell=1,2, \ldots , L-1$, the propagation matrix between layer $\ell$ and layer $\ell+1$, and with $\boldW_0 \in \Complex^{M \times S}$, the propagation matrix between the $S$ RF feeds and the first layer.  The fixed propagation matrices are determined by the layer geometry, technology, and the operating frequency.
The only online design freedom is given by the reconfigurable meta-atoms at each layer. Specifically, the transfer matrix of layer $\ell$ is denoted to as $\Phi_{\ell}(\Parametersl)$, where $\Parametersl=\{\Parameterlm \}$, for $m=1,2, \ldots, M$, are the $M$ tunable complex parameters.  Assuming no coupling between meta-atoms, it is $\Phi_{\ell}(\Parametersl)=\diag{\Parametersl}$. For a passive \ac{SIM}, the following constraint must hold: $|\Parameterlm|\le 1$, $\forall \ell, m$. We gather all the reconfigurable parameters in the vector $\Parameters=\{ \Parametersl\}$ whose size is $N=M\, L$.
The resulting transfer matrix of the \ac{SIM} is~\cite{AnXuetAl:J23} 
\begin{equation}
\boldH(\Parameters)=\Phi_{L}(\Parameters_{L}) \boldW_{L-1} \Phi_{L-1}(\Parameters_{L-1}) \ldots \boldW_1 \Phi_{1}(\Parameters_{1}) \boldW_0  \in \Complex^{M \times S}\, .  
\label{eq:H}
\end{equation}

Define the normalized transfer matrix 
\begin{equation}
\label{eq:tH}
\widetilde{\boldH}(\Parameters)= \frac{\boldH(\Parameters)}{\| \boldH(\Parameters)\|_{\text{F}}}
\end{equation}
where $\left \|  \cdot  \right \|_{\text{F}}$ denotes the the Frobenius norm. In a typical design problem, one is interested in obtaining a target response apart from a constant that can be absorbed in the link budget. 
In the following, without loss of generality, we allow the parameters to take any complex value, i.e., $\Parameters \in \Complex^{N}$. In fact, thanks to the specific structure of \eqref{eq:H}, any nonzero complex configuration can be rescaled so that $|\Parameterlm|\le 1$ for all entries without altering \eqref{eq:tH}. 

\subsection{Problem statement}

For further convenience, define the vectorized version $\boldh(\Parameters)=\mathrm{vec}(\boldH(\Parameters))$ of $\boldH(\Parameters)$, where the operator $\mathrm{vec}(\boldH)$ stacks the columns of matrix $\boldH$ into a column vector. The resulting parametric vector $\boldh(\Parameters)$ can be seen as a mapping $\boldh: \Complex^N \to \Complex^{K}$ between the complex parameter space of dimension $N$ and the complex output space of dimension $K=MS$ associated with all transformation matrices $\boldH(\Parameters)$ of size $M\times S$.

\begin{definition}[Linear universal approximation]
The \ac{SIM} acts as a linear universal approximator if, for any target mapping $\boldh_0$, there exists a parameter configuration $\Parameters^*$ such that $\| \boldh(\Parameters^*) - \boldh_0\| < \epsilon$ for any desired accuracy level $\epsilon$. 
\end{definition}

In terms of algebraic geometry, the universal approximation objective corresponds to the property
\begin{equation}
\label{eq:UA}
\overline{\boldh \left (\Complex^N \right )}^{\,\mathrm{Eucl}}=\Complex^{K}  
\end{equation}
where $\overline{\Ucal}^{\,\mathrm{Eucl}}$ denotes the closure of the set $\Ucal$ according to the Euclidean geometry.
The meaning of \eqref{eq:UA} is that the image $\boldh\left (\Complex^N\right )$ of the mapping $\boldh$ is the entire output space apart from a subset of Lebesgue measure zero. In other words, $\boldh$ is a universal approximator if and only if it is \emph{surjective almost everywhere}. By the definition of the closure of a set, for any point in the output space not in the image of $\boldh$, we can find a sequence of points in the image converging to it, i.e., with any desired accuracy.

Therefore, the problem addressed in this paper is to establish practical necessary and sufficient conditions under which a fixed \ac{SIM} architecture, hence, fixed $\boldW_0, \boldW_1, \ldots, \boldW_{L-1}$, acts as a linear universal approximator, i.e., satisfies \eqref{eq:UA}.

\section{Necessary and Sufficient Conditions}

Before enunciating the necessary and sufficient conditions for \eqref{eq:UA}, some introductory definitions and properties are needed. 

\begin{proposition}
The map $\boldh$ is polynomial and holomorphic.  \end{proposition}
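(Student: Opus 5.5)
The plan is to expand the product in \eqref{eq:H} entrywise and show that every component of $\boldh(\Parameters)$ is a polynomial in the complex parameters $\Parameters$ with coefficients fixed by the propagation matrices. Holomorphy then follows immediately, since polynomials are entire functions.

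First, I will write the diagonal layer matrices as $\Phi_{\ell}(\Parametersl)=\diag{\Parametersl}$. Each entry of $\Phi_{\ell}$ is then either $0$ or a single coordinate of $\Parameters$, so each entry is a degree-at-most-one polynomial. Carrying out the matrix products in \eqref{eq:H} gives, for the $(m,s)$ entry,
\begin{equation*}
[\boldH(\Parameters)]_{m,s}=\sum_{m_1,\ldots,m_{L-1}} \phi_{L,m}\,[\boldW_{L-1}]_{m,m_{L-1}}\,\phi_{L-1,m_{L-1}}\cdots[\boldW_1]_{m_2,m_1}\,\phi_{1,m_1}\,[\boldW_0]_{m_1,s},
\end{equation*}
where $\phi_{\ell,m}$ denotes the $m$-th entry of $\Parametersl$. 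This is a finite sum of monomials. Each monomial has degree exactly $L$ and is multilinear, with one variable taken from each layer, and its coefficient is a product of entries of the fixed matrices $\boldW_\ell$. Hence each of the $K=MS$ components of $\boldh=\mathrm{vec}(\boldH)$ is a polynomial in $\Parameters\in\Complex^N$. More precisely, it is multi-homogeneous of degree one in each block $\Parametersl$.

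For holomorphy, I will note that monomials are complex-differentiable in each variable separately and are continuous. Sums and products of holomorphic functions are holomorphic. Each component is therefore holomorphic on $\Complex^N$ in the several-variables sense, either by Hartogs' theorem or directly because polynomials are convergent power series. A vector map is holomorphic when its components are, so $\boldh:\Complex^N\to\Complex^K$ is holomorphic.

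The only step that needs care is to make clear that $\boldh$ involves no complex conjugates of the parameters. This holds because the parameters enter \eqref{eq:H} linearly through $\diag{\cdot}$, and because the constraint $|\Parameterlm|\le1$ is dropped, as the paper has already justified. The normalized map \eqref{eq:tH} is not polynomial, but the statement concerns $\boldh$, so I will make no claim about \eqref{eq:tH}.
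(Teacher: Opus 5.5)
Your proposal is correct and takes essentially the same route as the paper: you expand the cascade \eqref{eq:H} into a finite sum of degree-$L$ monomials in the layer parameters, with coefficients drawn from the fixed $\boldW_\ell$ and no conjugated variables, so each component of $\boldh$ is a polynomial and hence holomorphic on $\Complex^N$. Your explicit index sum just makes concrete the paper's appeal to bilinearity of matrix multiplication, and invoking Hartogs' theorem is unnecessary, since polynomials are holomorphic directly.
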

\begin{proof}
Each $\Phi_{\ell}(\Parametersl)$ depends linearly on its diagonal entries. Eqn. \eqref{eq:H} is a product of $2L-1$ matrix factors alternating fixed matrices and variable diagonal matrices. Since matrix multiplication is bilinear, the product is a function of monomials of total degree $L$ in the variables $\Parameterlm$. Being a finite sum of monomials, $\boldh$ is polynomial in $\Parameters$ and does not depend on its conjugate, hence it is holomorphic, i.e., complex analytic. 
\end{proof}

\begin{definition} [Jacobian of $\boldh$]
The Jacobian of $\boldh$ is denoted with $\boldJ_{\boldh}(\Parameters) \in \Complex^{K \times N}$, whose generic element is defined as $\left [ \boldJ_{\boldh}(\Parameters) \right ]_{k,n}=\frac{\partial \left [\boldh(\Parameters) \right ]_k}{\partial \Parameters_n}$, for $k=1,2, \ldots, K$, and $n=1,2, \ldots ,N$.
\end{definition}

\begin{theorem}[Necessary conditions for universal approximation]
\label{thm:Necessary}
Necessary but not sufficient conditions for $\boldH$, and hence $\boldh$, to be surjective almost everywhere are
\begin{equation}
\label{eq:Necessary}
\begin{array}{l}
\text{(a) }L \ge S \\
\text{(b) }\rank{\boldW_0}=S \, .
\end{array}
\end{equation}
\end{theorem}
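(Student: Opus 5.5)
Both conditions are consequences of one requirement: the image of $\boldh$ must be dense in $\Complex^{K}$ with $K=MS$. Because $\boldh$ is polynomial (Proposition above), its image is a constructible set. The Euclidean closure of that image is therefore the Zariski closure, which is an irreducible variety of complex dimension $\max_{\Parameters}\rank{\boldJ_{\boldh}(\Parameters)}$. So \eqref{eq:UA} forces two things. First, the generic rank of $\boldJ_{\boldh}$ must equal $K$. Second, the image cannot lie in any proper closed subset, for instance the determinantal variety of matrices of rank $<S$. I would prove (b) and (a) separately with these two observations, and end with a brief remark on why they are not sufficient.

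\emph{Condition (b).} By \eqref{eq:H}, $\boldH(\Parameters)=\boldA(\Parameters)\boldW_0$, where $\boldA(\Parameters)$ is the product of the first $2L-2$ factors. Hence $\rank{\boldH(\Parameters)}\le \rank{\boldW_0}$ for every $\Parameters$. Suppose $\rank{\boldW_0}<S$. Then the image lies in $\{\boldH:\rank{\boldH}<S\}$, which is the common zero set of all $S\times S$ minors. This set is closed and proper in $\Complex^{M\times S}$, since $M\ge S$ is implicit, as $\boldW_0$ must have rank $S$. A proper closed subset has empty interior and Lebesgue measure zero, so its closure cannot be all of $\Complex^K$. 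This contradicts \eqref{eq:UA}. An equivalent argument: every column of $\boldH(\Parameters)^{T}$ is orthogonal to $\ker\boldW_0$, a fixed nontrivial linear constraint.

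\emph{Condition (a).} I would use a dimension count rather than a rank bound. The map $\boldh$ has redundancies, because the diagonal factors commute with scalars: replacing $\Phi_\ell$ by $c_\ell\Phi_\ell$ with $\prod_\ell c_\ell=1$ leaves $\boldH$ unchanged. This $(L-1)$-dimensional torus acts on the parameters, so $\dim \boldh(\Complex^N)\le ML-(L-1)$. The required dimension is $MS$. If $L\le S-1$, then $ML-L+1\le M(S-1)-(S-1)+1 < MS$, because $M\ge S\ge 1$ gives $-(S-1)+1\le M-(S-1)+\ldots$. More simply, $ML-L+1\le MS-M-S+2\le MS$ fails only in degenerate cases, so the slack has to be checked. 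Concretely, $ML-(L-1)\ge MS$ already forces $L\ge S$. The reason is that $L\le S-1$ gives $ML-L+1\le MS-M-(S-1)+1 = MS - M - S + 2 < MS$ whenever $M+S>2$. I would dispose of the trivial case $M=S=1$ separately: there $L\ge1=S$ holds automatically.

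The whole argument then rests on the standard fact that a holomorphic map's image has dimension at most the generic Jacobian rank. It follows that the generic rank of $\boldJ_{\boldh}$ is at most $N-(L-1)<K$, so the image is contained in a proper subvariety and \eqref{eq:UA} fails. The main obstacle is stating this measure-zero claim cleanly. I would invoke Sard's theorem, or the Chevalley constructibility argument, together with the fact that a proper subvariety has measure zero.

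\emph{Non-sufficiency.} This needs only a counterexample. Take $\boldW_1=\cdots=\boldW_{L-1}=\mathbf{I}$. Then $\boldH=\diag{\prod_\ell\Parameters_\ell}\boldW_0$, which is effectively a single layer with dimension at most $M<MS$ when $S\ge2$. This choice satisfies both (a) and (b) but is not universal.
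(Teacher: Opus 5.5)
Your proof is correct and uses the same two ingredients as the paper: a parameter count for (a), and $\rank{\boldH(\Parameters)}\le\rank{\boldW_0}$ for (b). The paper states each in one line, whereas you supply what makes them valid. A dimension deficit excludes density only because $\boldh$ is polynomial; a merely continuous map can be space-filling. You handle this by identifying the dimension of the Zariski closure of the image with the generic Jacobian rank, so a deficit confines the image to a proper, Euclidean-closed subvariety. Sard alone would give only a measure-zero image, which suffices for the ``almost everywhere'' wording but not for non-density in \eqref{eq:UA}. You also give an explicit non-sufficiency example, which the paper only asserts after the proof.

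Two adjustments. In (b), you justify properness of the rank-deficient locus by ``$M\ge S$ is implicit, as $\boldW_0$ must have rank $S$''; this is circular, and for $M<S$ that locus is all of $\Complex^{M\times S}$. Lead with your kernel argument instead: any $\mathbf v\neq\mathbf 0$ with $\boldW_0\mathbf v=\mathbf 0$ confines every $\boldH(\Parameters)$ to the proper subspace $\{\boldH:\boldH\mathbf v=\mathbf 0\}$, regardless of $M$.

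In (a), the torus refinement and the inequality chain are unnecessary, because $L\le S-1$ already gives $N=ML\le MS-M<K$. Yet the refinement buys more than you use. Since $\boldh$ is linear in each block $\Parameters_\ell$, the $L-1$ vectors with $\Parameters_\ell$ in block $\ell$, $-\Parameters_{\ell+1}$ in block $\ell+1$ and zeros elsewhere lie in $\ker\boldJ_{\boldh}(\Parameters)$. At $\boldsymbol{1}_N$ this says every block of $M$ columns of $\boldJ_1$ sums to $\boldh(\boldsymbol{1}_N)$. Hence the generic rank is at most $ML-L+1$, and $L(M-1)+1\ge MS$ is necessary. For $S\ge 2$ this excludes $L=S$ for every choice of propagation matrices. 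That is sharper than (a), and it shows the insufficiency of (a) without any special choice of $\boldW_\ell$.
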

\begin{proof}
Theorem \ref{thm:Necessary} can be easily proved starting from dimension arguments. In fact, surjectivity requires that the dimension of the domain is not smaller than the dimension of the output space. In our case, the dimension of the output space is $K=MS$, whereas the dimension of the parameter space is $N=ML$; then condition (a) follows. Regarding condition (b), if $\boldW_0$ were not full rank, then it would be impossible for $\boldH$ to be full rank and hence approximate any full rank target matrix.   
\end{proof}

Condition $L\ge S$ provides a lower bound on the number of layers required for universal approximation. Interestingly, such a bound depends only on the number of RF feeders, $S$, and not on the number of meta-atoms, $M$. However, without additional assumptions, this condition is not sufficient, since the propagation matrices may cause the image of $\boldh$ to lie in a non-dense subspace of $\Complex^{K}$.

\begin{theorem}[Sufficient condition for universal approximation]
\label{thm:Sufficient}
If there exists $\Parameters_0 \in \Complex^N$ such that
\begin{equation}
\label{eq:Sufficient}
\rank{\boldJ_{\boldh}(\Parameters_0)}=K
\end{equation}
then 
\begin{equation}
\label{eq:UA1}
\overline{\boldh \left (\Complex^N \right )}^{\,\mathrm{Eucl}}=\Complex^{K} \, .
\end{equation}  
Consequently, $\boldh$, and hence $\boldH$, are surjective almost everywhere, i.e., they are linear universal approximators.
\end{theorem}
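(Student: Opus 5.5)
Our approach is the standard argument for polynomial maps: the image of a polynomial map with generically full-rank Jacobian is Zariski-dense, and by Chevalley's theorem it is constructible. A dense constructible set contains a dense Zariski-open set, and that set is also dense in the Euclidean topology. The proof proceeds in three steps.

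\emph{Step 1 (generic full rank).} By the earlier Proposition, $\boldh$ is polynomial, so every entry of $\boldJ_{\boldh}(\Parameters)$ is a polynomial in $\Parameters$. Condition \eqref{eq:Sufficient} says some $K\times K$ minor of $\boldJ_{\boldh}$ is a polynomial $p(\Parameters)$ with $p(\Parameters_0)\neq 0$. Hence $p$ is not identically zero, and the set $\Ucal=\{\Parameters: p(\Parameters)\neq 0\}$ is a nonempty Zariski-open subset of $\Complex^N$. Its complement is a proper algebraic set, which has Lebesgue measure zero. On $\Ucal$ the map $\boldh$ is a holomorphic submersion.

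\emph{Step 2 (the image has nonempty interior and is Zariski-dense).} By the holomorphic implicit/submersion theorem, $\boldh$ is an open map on $\Ucal$. Hence $\boldh(\Ucal)$ contains a nonempty Euclidean-open ball $B\subset\Complex^K$. A polynomial vanishing on the image would vanish on $B$, and by the identity theorem it would then vanish identically. So the Zariski closure of $\boldh(\Complex^N)$ is all of $\Complex^K$.

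\emph{Step 3 (Zariski density to Euclidean density).} By Chevalley's theorem, $\boldh(\Complex^N)$ is constructible, i.e.\ a finite union of locally closed sets $Z_i\setminus Y_i$ with $Z_i,Y_i$ Zariski closed and $Y_i\subsetneq Z_i$. Since the union is Zariski-dense in the irreducible space $\Complex^K$, some $Z_i$ must equal $\Complex^K$. Therefore the image contains $\Complex^K\setminus Y_i$, where $Y_i$ is a proper algebraic subset.

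The complement of a proper algebraic subset of $\Complex^K$ is dense in the Euclidean topology. To see this, fix a polynomial $q\not\equiv 0$ vanishing on $Y_i$. Near any point, $q$ cannot vanish on an open set, again by the identity theorem, so every Euclidean ball meets $\{q\neq 0\}\subset \Complex^K\setminus Y_i$. Thus the Euclidean closure of the image is $\Complex^K$, which is \eqref{eq:UA1}. The points missed by the image lie in $Y_i$, a set of Lebesgue measure zero. This gives almost-everywhere surjectivity, and hence universal approximation in the sense of the Definition.

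The main obstacle is Step 3. The local statement from Step 2 (an open ball in the image) only gives approximation near $\boldh(\Parameters_0)$, not of an arbitrary target. Passing to global density requires the algebraic structure, namely Chevalley's constructibility together with the irreducibility of $\Complex^K$; holomorphy alone would not suffice. I would invoke Chevalley's theorem directly.

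A more elementary alternative exploits that $\boldh$ is homogeneous in each layer, so the image is a cone. This alone does not yield density, however, so I would rely on Chevalley.
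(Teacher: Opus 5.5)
Your proposal is correct and follows essentially the same route as the paper. A full-rank Jacobian at one point puts a Euclidean-open set in the image; you get it from the submersion/open-mapping theorem, while the paper applies the inverse function theorem to a $K\times K$ restriction. The identity theorem then gives Zariski density, and Chevalley's constructibility together with the irreducibility of $\mathbb{C}^K$ yields a nonempty Zariski-open subset of the image that is Euclidean-dense with a Lebesgue-null complement. Your bookkeeping in Step 3 is slightly tidier than the paper's: you take $Y_i \subsetneq Z_i$ so the open piece is nonempty, and you conclude only that the image contains, rather than equals, $\mathbb{C}^K\setminus Y_i$.
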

\begin{proof}
See the Appendix.
\end{proof}

\begin{corollary}
\label{cor:suff}
A sufficient  but not necessary condition for $\boldH$, and hence $\boldh$, to be surjective almost everywhere is
\begin{equation}
\label{eq:J1}
\rank{\boldJ_1}=K
\end{equation}
where
\begin{align}
\left [\boldJ_1 \right ]_{:,n}&=\left ( \boldB_{\ell}^\top \bolde_m \right ) \otimes  \left ( \boldA_{\ell} \bolde_m \right )   \label{eq:defcor} \\
\boldA_{\ell}&=\boldW_{L-1} \boldW_{L-2} \ldots \boldW_{\ell}   \quad \boldA_{L}=\boldI_M \nonumber \\
\boldB_{\ell}&=\boldW_{\ell-1} \boldW_{\ell-2} \ldots \boldW_{1} \boldW_0  \nonumber \\
\bolde_m&=[0,0,1,0,\dots,0]^\top \quad \text{the one in the $m$th position} 
\nonumber 
\end{align}
for $n=(\ell-1)M +m$, $\ell=1,2, \ldots , L$, and $m=1,2, \ldots, M$, where $\otimes$ represents the Kronecker product.
\end{corollary}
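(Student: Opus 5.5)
The plan is to recognize $\boldJ_1$ as the Jacobian $\boldJ_{\boldh}(\Parameters_0)$ evaluated at the particular configuration $\Parameters_0=\mathbf{1}$, in which every meta-atom coefficient equals one, so that $\Phi_{\ell}=\boldI_M$ for all $\ell$. Once this identification is made, the corollary follows immediately from Theorem~\ref{thm:Sufficient}. Indeed, if $\rank{\boldJ_1}=K$, then $\rank{\boldJ_{\boldh}(\mathbf{1})}=K$, so condition \eqref{eq:Sufficient} holds with $\Parameters_0=\mathbf{1}$, and \eqref{eq:UA1} follows.

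The key computation is the partial derivative of $\boldH(\Parameters)$ with respect to a single parameter $\Parameterlm$, with index $n=(\ell-1)M+m$. Since $\boldH$ is multilinear in the diagonal matrices $\Phi_1,\ldots,\Phi_L$ and $\Phi_\ell$ depends linearly on $\Parameterlm$, only the $\ell$th factor is differentiated, and $\partial \Phi_\ell/\partial \Parameterlm=\bolde_m\bolde_m^\top$. This gives
\[
\frac{\partial \boldH}{\partial \Parameterlm}=\Big(\Phi_L\boldW_{L-1}\cdots\boldW_{\ell}\Big)\bolde_m\bolde_m^\top\Big(\boldW_{\ell-1}\Phi_{\ell-1}\cdots\Phi_1\boldW_0\Big).
\]
For $\ell=L$ the left factor is simply $\boldI_M$, and for $\ell=1$ the right factor is $\boldW_0$, which matches the conventions $\boldA_L=\boldI_M$ and $\boldB_1=\boldW_0$. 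Evaluating at $\Parameters=\mathbf{1}$, the left factor reduces to $\boldA_\ell$ and the right factor to $\boldB_\ell$. Applying $\mathrm{vec}(\bolda\boldb^\top)=\boldb\otimes\bolda$ with $\bolda=\boldA_\ell\bolde_m$ and $\boldb^\top=\bolde_m^\top\boldB_\ell$, i.e. $\boldb=\boldB_\ell^\top\bolde_m$, yields exactly the column $[\boldJ_1]_{:,n}$ in \eqref{eq:defcor}. Hence $\boldJ_1=\boldJ_{\boldh}(\mathbf{1})$.

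The only delicate points are bookkeeping. The first is the index ordering $n=(\ell-1)M+m$, which must be consistent with the stacking of $\Parameters$. The second is the column-stacking convention of $\mathrm{vec}$, which fixes the order of the Kronecker factors.

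The "not necessary" part is argued as follows. Theorem~\ref{thm:Sufficient} only requires full rank at \emph{some} $\Parameters_0$, and $\mathbf{1}$ is one particular point. The rank of the Jacobian can drop on a proper algebraic subset, and $\mathbf{1}$ may lie on it; for instance, products of the $\boldW$'s may be degenerate at identity phases. A simple illustration would suffice here, without a detailed counterexample, since the corollary's claim of non-necessity is essentially this remark.
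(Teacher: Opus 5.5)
Your proposal is correct and follows the paper's proof: choose $\Parameters_0=\boldsymbol{1}_N$ so that every $\Phi_{\ell}=\boldI_M$, identify $\boldJ_{\boldh}(\boldsymbol{1}_N)$ with $\boldJ_1$, and invoke Theorem~\ref{thm:Sufficient}; your multilinear derivative $\partial\boldH/\partial\varphi_{\ell,m}$ together with $\mathrm{vec}(\mathbf{a}\mathbf{b}^\top)=\mathbf{b}\otimes\mathbf{a}$ correctly supplies the ``straightforward matrix manipulations'' the paper leaves out. The paper gives no argument for the ``not necessary'' clause, and your remark only asserts that $\boldsymbol{1}_N$ could lie in the rank-deficient locus; a concrete instance settles it, e.g.\ $M=2$, $S=1$, $L=2$, $\boldW_0=[1,1]^\top$, $\boldW_1=\left[\begin{smallmatrix}1&-1\\1&-1\end{smallmatrix}\right]$, for which $\boldH(\Parameters)=(\varphi_{1,1}-\varphi_{1,2})\,[\varphi_{2,1},\varphi_{2,2}]^\top$ is onto $\Complex^2$ while $\rank{\boldJ_1}=1$.
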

\begin{proof}
This corollary is a particular case of Theorem \ref{thm:Sufficient} by selecting the particular configuration $\Parameters_0=\boldsymbol{1}_N=[1,1, \ldots , 1]$. With this choice, the diagonal matrices $\Phi_{\ell}$ become the identity matrix, i.e., $\Phi_{\ell}(\boldsymbol{1}_N)=\boldI_M, \forall \ell$. Through straightforward matrix manipulations, not reported here due to space constraints, the Jacobian of $\boldh(\Parameters)$ evaluated at $\boldsymbol{1}_N$, $\boldJ_1=\boldJ_{\boldh}(\boldsymbol{1}_N)$, is given by \eqref{eq:defcor}. Therefore, if \eqref{eq:J1} is satisfied, then we have found a full-rank configuration $\Parameters_0$ and the result \eqref{eq:UA1} of Theorem \ref{thm:Sufficient} follows.
\end{proof}

\begin{remark}
\item In the common case of identical $\boldW_{\ell}=\boldW$ matrices, for $\ell>0$, definitions \eqref{eq:defcor} simplify into $\boldA_{\ell}=\boldW^{L-\ell}$ and $\boldB_{\ell}=\boldW^{\ell-1} \boldW_0$
\end{remark}

\begin{remark}
\item For a given \ac{SIM}, the sufficient condition in Corollary \ref{cor:suff} is directly checkable in closed form through a simple \ac{SVD} operation.
\end{remark}

\begin{remark}
As discussed in the Appendix, Theorem \ref{thm:Sufficient} is not restricted to \acp{SIM}; rather, it applies to any mapping that is polynomial or rational over its domain
\end{remark}


\section{Numerical Examples}
\label{Sec:NumericalResults}

Although a numerical verification of the theorems established in the previous section is neither feasible nor necessary after their rigorous proof, the numerical examples presented in this section provide insight into the dependence of the performance on the \ac{SIM} architecture.

\begin{figure}[!t]
\centering\includegraphics[width=1\columnwidth]{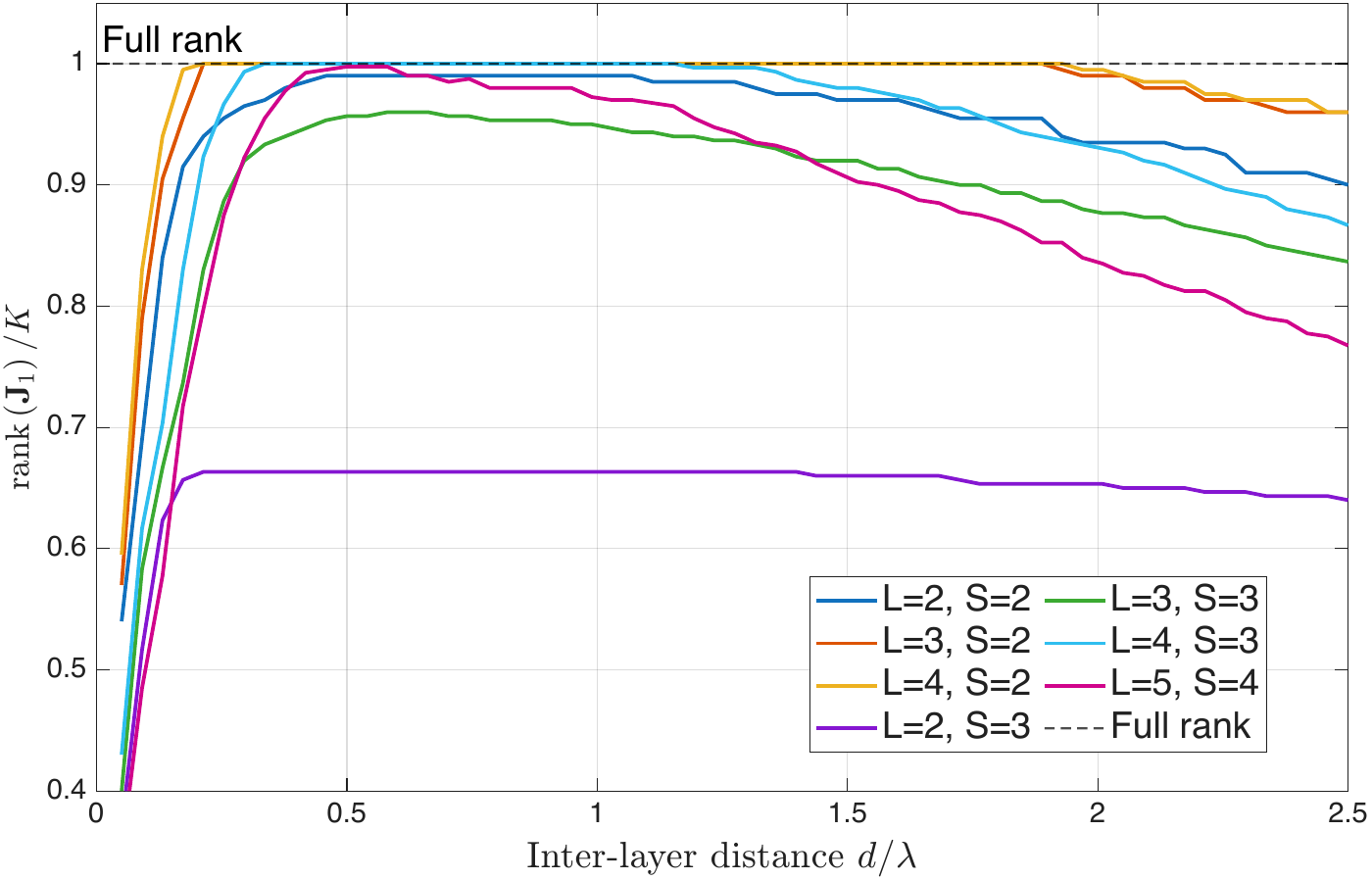}
\caption{Normalized numerical rank of the Jacobian  $\boldJ_1$ in Corollary \ref{cor:suff} versus the normalized inter-layer distance  $d/\lambda$ for different SIM depths $L$ and numbers of input streams $S$.} 
\label{Fig:Rank}
\end{figure}

We first analyze when condition \eqref{eq:J1} is satisfied for different interlayer distances $d$ and different numbers of layers $L$ and data streams $S$. 
In particular, each layer of the \ac{SIM} consists of a square panel of $M=10 \times 10$ meta-atoms spaced by $\lambda/2$, where $\lambda$ is the wavelength.
We model each layer-to-layer link with the scalar Rayleigh--Sommerfeld diffraction formula standard in the \ac{SIM} literature~\cite{AnXuetAl:J23}. Specifically, for a source point at position $\mathbf p$ on one layer and a destination point at $\mathbf q$ on the next, element area $A_t$, and $r=|\mathbf q-\mathbf p|$, it is
\begin{equation}
w(\mathbf p,\mathbf q)=\frac{A_t\cos\chi}{r}\left(\frac{1}{2\pi r}-\frac{j}{\lambda}\right)e^{j2\pi r/\lambda}
\label{eq:rs}
\end{equation}
where $\cos\chi$ is the obliquity factor (the cosine of the angle between the layer normal and the line joining $\mathbf p$ and $\mathbf q$, and $A_t=\lambda^2$. Recently, more refined models have been presented based on multi-port circuit theory \cite{AbrBarToc:25}. 

Fig.~\ref{Fig:Rank} reports the normalized numerical (effective) rank of
$\mathbf{J}_{\mathbf{1}}$, computed by setting an upper bound on the condition number set equal to  $10^{3}$, 
as a function of the normalized inter-layer distance $d/\lambda$, for different pairs $(L,S)$. As predicted by Theorem~\ref{thm:Necessary}, the configuration with $L<S$ cannot achieve
$\operatorname{rank}(\mathbf{J}_{\mathbf{1}})=K$, since the number of
complex programmable parameters, equal to $N=ML$, is smaller than the
dimension $K=MS$ of the complex output space. Conversely, for the
architectures satisfying $L\geq S$, the Jacobian can attain full rank over a broad range of inter-layer distances. Whenever
$\operatorname{rank}(\mathbf{J}_{\mathbf{1}})=K$, the readily checkable
sufficient condition in Corollary~III.5 is fulfilled, thereby certifying
that the corresponding SIM architecture is a linear universal
approximator.

The dependence on $d/\lambda$ further demonstrates that the layer-count
condition $L\geq S$, while necessary, is not sufficient by itself. In
particular, for very small inter-layer distances, the propagation between
adjacent layers provides limited spatial mixing, and the transformations
induced by different layers become nearly redundant, thus causing a rank drop.

As the inter-layer distance increases, diffraction mixes the fields across
the meta-atoms and makes the actions of successive programmable layers
more distinct, enabling the SIM to realize a richer class of linear transformations than a single-layer surface. This increased diversity can make the columns of $\mathbf{J}_{\mathbf{1}}$ linearly independent and allow the full-rank
condition to be satisfied.
Large inter-layer separations generally reduce the
magnitude of the propagation coefficients and can decrease the rank as well as the overall
power-transfer efficiency. This latter effect concerns the attenuation of the SIM cascade and is distinct from the algebraic universal-approximation property characterized by the rank of $\mathbf{J}_{\mathbf{1}}$.
From Fig. \ref{Fig:Rank}, it can be observed that for the considered model, the optimum value of the inter-layer distance is $d=\lambda/2$, which will be used subsequently. 

\begin{figure}[!t]
\centering\includegraphics[width=1\columnwidth]{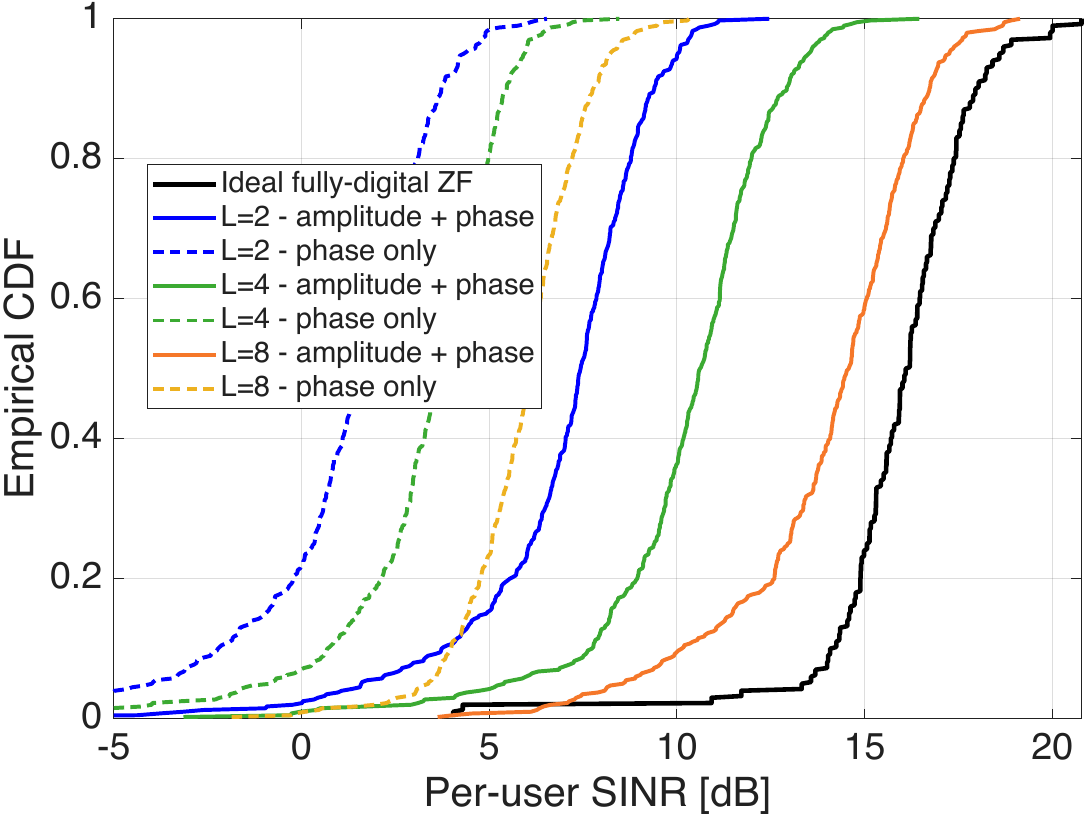}
\caption{Empirical CDF of the per-user SINR for different numbers of layers. Comparison between ideal full digital, amplitude-phase optimization, and only phase optimization.} 
\label{Fig:CDF}
\end{figure}

As a second example, we consider a transmitting \ac{BS} hosting a \ac{SIM} designed to realize a \ac{ZF} precoder in the wave domain to serve $S=4$ single-antenna users randomly deployed in a square area in front of the \ac{BS} between $10\,$m and $100\,$m. The carrier frequency is $f_c=3.5\,$GHz, the noise power is $\sigma^2=-90\,$dBm, and the transmit power is $\Ptx=-15\,$dBm.
For each scenario (100 Monte Carlo iterations), the parameters of the \ac{SIM} $\Parameters$ are computed by numerically optimizing the following constrained minimization problem~\cite{AnXuetAl:J23} 
\begin{align}
 \hat{\Parameters}&=\arg \min_{\Parameters} \left \|  \widetilde{\boldH}(\Parameters) - \widetilde{\boldH}_0 \right \| \nonumber  \\
 & \text{s.t. } |\Parameterlm| \le 1, \quad \text{for } \ell=1,2, \ldots L, \, m=1,2, \ldots, M 
\end{align}
where $\widetilde{\boldH}_0$ is the ideal normalized \ac{ZF} matrix computed by assuming perfect channel estimation. 
The \ac{SIM} coefficients are optimized through a multi-start projected gradient method with Armijo backtracking line search, with complex derivatives evaluated via Wirtinger calculus and feasibility enforced by projection onto the meta-atom constraint set \cite{Ber:B99}. 

Fig.~\ref{Fig:CDF} shows the per-user \ac{SINR} empirical \ac{CDF} when the fully digital and \ac{SIM} precoders are normalized to the same output-layer radiated power. Hence, the figure removes the attenuation of the \ac{SIM} and isolates the quality of the synthesized precoding transformation. The investigation of the joint optimization of \ac{SIM} power efficiency and the quality of the synthesized transformation will be the subject of future work. 
The fully digital \ac{ZF} benchmark ideally diagonalizes the multiuser channel, whereas the \ac{SIM} realizes an approximation of the desired mapping through a cascade of diagonal programmable layers separated by fixed propagation operators. Consequently, the \ac{SINR} gap is caused by residual mismatch in the effective channel, which reduces the desired gain and, more importantly, produces residual inter-user interference. For comparison, the case where only the optimization of the phase of $\Parameterlm$ is also reported, and the amplitude is kept constant at one. As can be noticed, the amplitude-and-phase \ac{SIM} outperforms the phase-only implementation because independent amplitude control
enlarges the feasible set and improves the approximation of the target \ac{ZF} matrix. 
As it can be noticed, as $L$ becomes larger than $S=4$, the performance improves significantly, getting close to that of the ideal full digital reference. The remaining gap can be ascribed to the non-optimality of the numerical minimum search that could fall into a local minimum.   

%
%
%
%
%

%

\section{Conclusion}
\label{Sec:Conclusion}


This paper establishes fundamental necessary and sufficient conditions for a \ac{SIM} architecture to universally approximate arbitrary complex linear maps. The resulting conditions are architecture-specific but independent of the particular design or optimization algorithm; for a given \ac{SIM} configuration, they provide a direct certificate of its universal approximation capability. 
More broadly, the theoretical framework developed in this paper is not limited to \acp{SIM} and can be applied to other classes of polynomial or rational mappings. While the present work focuses exclusively on the existence of the universal approximation property, future work will investigate efficiency-related aspects.


%
%
%
%
%
%

\section*{Acknowledgment}
The author would like to thank Nicol\'o Decarli, Mattia Fabiani, Francesco Miccoli, Malte Schellmann, Jose Mauricio Perdomo Zelaya, and Chan Zhou for the useful discussions. 
This work was supported by Huawei under the project EM-SP.

\section*{Appendix: Proof of Theorem \ref{thm:Sufficient}}

The proof of Theorem \ref{thm:Sufficient} makes use of results from algebraic geometry. 
We first recall the relevant algebraic-geometric definitions and properties related to the \emph{Zariski topology} \cite{Hartshorne:B77}.

\subsection{Zariski topology}
The Zariski topology is a topology that is well-suited for the study of polynomial equations in algebraic geometry and is defined on geometric objects called \emph{varieties}. An algebraic variety is defined as the set of solutions of a system of polynomial equations over the real or complex numbers. 

 \begin{definition}[Closed sets] In the Zariski topology, given a collection of polynomials, a closed set is the set of all points where all the polynomials evaluate to zero. 
 \end{definition}


\begin{definition}[Open sets] A Zariski open set is the complement of a Zariski closed set. 
\end{definition}
    
\begin{definition}[Zariski closure]
For \(\Ucal \subseteq\Complex^M\), its \emph{Zariski closure},
denoted by $\overline{\Ucal}^{\,\mathrm{Zar}}$, 
is the smallest Zariski-closed set containing \(\Ucal\). 
\end{definition}

\begin{definition}[Dominant map]
A polynomial map $\boldh:\mathbb{C}^N\to\Complex^K$ is called \emph{dominant} (\emph{Zariski dense}) if
$$
\overline{\boldh \left. (\Complex^N \right )}^{\,\mathrm{Zar}}
=\Complex^K.
$$
Equivalently, \(\boldh\) is dominant if the only polynomial $p$
satisfying $p(\boldh(\Parameters))=0, 
\, \forall \Parameters\in\Complex^N$
is the zero polynomial. 
\end{definition}

\begin{definition}[Constructible subset]
A subset of an algebraic variety is called
\emph{constructible} if it is a finite union of locally closed sets,
i.e., sets of the form $ \Ucal\cap \Zcal$,   where \(\Ucal\) is Zariski open and \(\Zcal\) is Zariski closed.  We also use the following standard consequence.
\end{definition}

 The Zariski topology is coarser than the Euclidean topology. 
 In fact, closed sets in the Zariski topology are \emph{rigid} because they are defined by polynomials: for instance, in the real line $\mathbb{R}$, the only Zariski closed sets are finite collections of points (because a non-zero polynomial can only have a finite number of roots). Instead, in a Euclidean space, closed sets are numerous and can take any shape, e.g., the interval $[0,1]$. Every Zariski closed set is also closed in the Euclidean topology. The converse is not true (e.g., a disk or a sphere is Euclidean closed but not Zariski closed). Any two non-empty open sets must intersect and cannot be disjoint. In fact, in the Euclidean topology, one can always take two distinct points and isolate them within two disjoint open balls. In the Zariski topology, this is impossible: open sets are so large that any two non-empty open sets will always overlap.

 
 \begin{figure}[!t]
\centering\includegraphics[width=1\columnwidth]{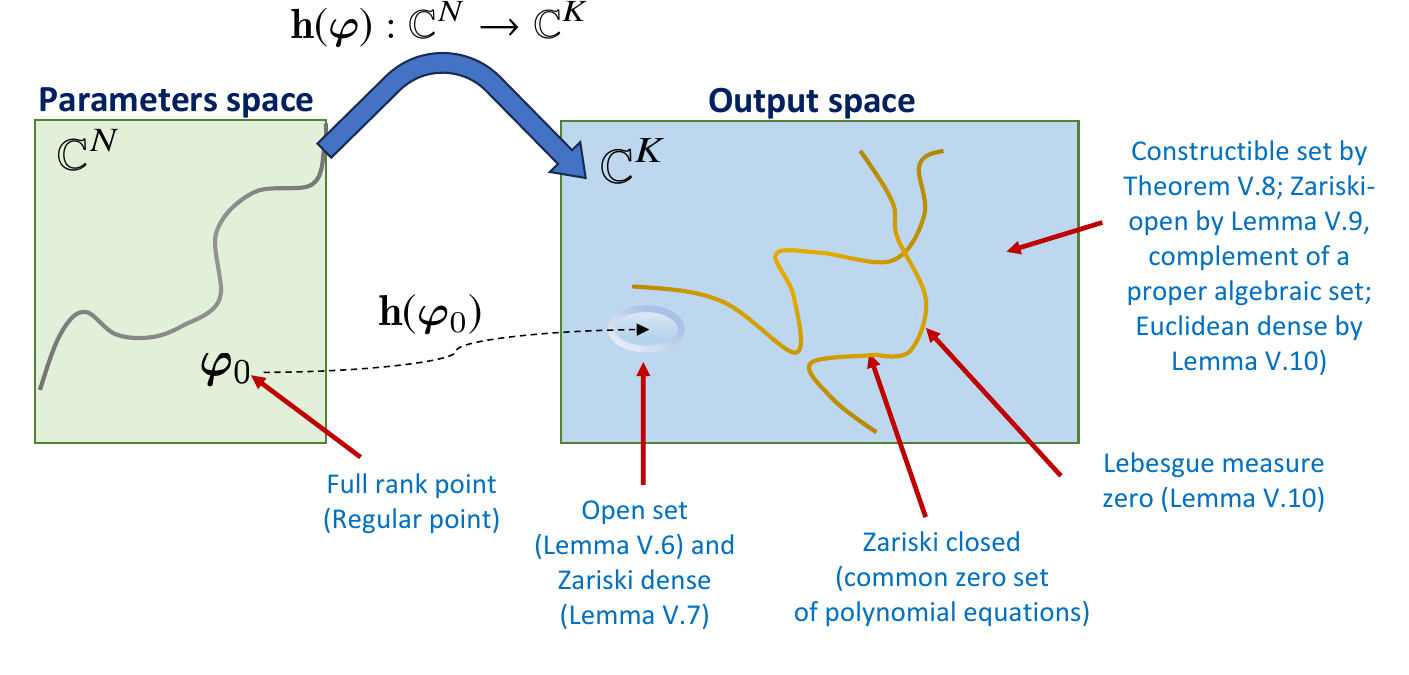}
\caption{Sketch of the proof of Theorem \ref{thm:Sufficient}.} 
\label{Fig:Proof}
\end{figure}

The proof of Theorem \ref{thm:Sufficient} consists of the following steps that are sketched in Fig. \ref{Fig:Proof}.

\subsection{Step 1: Full rank implies that the image contains an open set}


\begin{lemma}
\label{lem:open-image}
Let
$
\boldh:\Complex^N\to\Complex^K
$
be holomorphic, with \(N\geq K\). If
$
\rank{ \boldJ_{\boldh}\left (\Parameters_0 \right )}=K
$ for some $\Parameters_0$, 
then \(\boldh\left (\Complex^N\right )\) contains a nonempty Euclidean-open subset of
\(\Complex^K\).
\end{lemma}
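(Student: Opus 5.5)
The plan is to reduce the statement to the holomorphic inverse function theorem applied to a suitable restriction of $\boldh$ to a $K$-dimensional affine slice through $\Parameters_0$.

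Since $\rank{\boldJ_{\boldh}(\Parameters_0)}=K$ and $\boldJ_{\boldh}(\Parameters_0)\in\Complex^{K\times N}$ with $N\ge K$, I will first select $K$ linearly independent columns. Let $\mathcal{I}=\{n_1,\ldots,n_K\}$ be their indices. Then I define the restricted map
\begin{equation*}
\boldg:\Complex^K\to\Complex^K,\qquad \boldg(\boldsymbol{z})=\boldh\big(\Parameters_0+\boldsymbol{P}\boldsymbol{z}\big),
\end{equation*}
where $\boldsymbol{P}\in\Complex^{N\times K}$ is the selection matrix whose $k$th column is $\bolde_{n_k}$. The parameters outside $\mathcal{I}$ are thus frozen at their values in $\Parameters_0$.

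The map $\boldg$ is holomorphic, since it is the composition of the holomorphic $\boldh$ (by the earlier Proposition) with an affine map. By the chain rule, $\boldJ_{\boldg}(\boldsymbol{0})=\boldJ_{\boldh}(\Parameters_0)\boldsymbol{P}$, which is the $K\times K$ submatrix formed by the selected columns. It is therefore nonsingular.

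Next I invoke the inverse function theorem. One option is the holomorphic version. Alternatively, I can identify $\Complex^K\simeq\mathbb{R}^{2K}$ and use the real version. There the real Jacobian of a holomorphic map is the $2K\times 2K$ realification of the complex Jacobian, with determinant $|\det \boldJ_{\boldg}(\boldsymbol{0})|^2\neq 0$. Either way, there exist an open neighbourhood $\mathcal{V}$ of $\boldsymbol{0}$ and an open neighbourhood $\mathcal{U}$ of $\boldg(\boldsymbol{0})=\boldh(\Parameters_0)$ such that $\boldg:\mathcal{V}\to\mathcal{U}$ is a biholomorphism.

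It then follows that $\mathcal{U}=\boldg(\mathcal{V})\subseteq\boldh(\Complex^N)$. Since $\mathcal{U}$ is nonempty and Euclidean-open in $\Complex^K$, the claim is proved.

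The only step requiring care is the justification that a nonsingular complex Jacobian gives a nonsingular real Jacobian. This rests on the Cauchy–Riemann structure, via the determinant identity $\det_{\mathbb{R}}=|\det_{\Complex}|^2$. An equivalent route avoids the slice altogether: apply the constant-rank or submersion theorem directly to $\boldh$ at $\Parameters_0$. I prefer the slice construction because it needs only the standard inverse function theorem.
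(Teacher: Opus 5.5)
Your proof is correct and follows essentially the same route as the paper: pick a nonvanishing $K\times K$ minor of $\boldJ_{\boldh}(\Parameters_0)$, freeze the remaining $N-K$ parameters at their values in $\Parameters_0$, and apply the holomorphic inverse function theorem to the resulting square map. This gives an open neighbourhood of $\boldh(\Parameters_0)$ contained in $\boldh(\Complex^N)$. Your alternative through the real inverse function theorem, using $\det_{\mathbb{R}}=|\det_{\Complex}|^2$, is also valid but not needed.
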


\begin{proof}
Since \(\boldJ_{\boldh}(\Parameters_0)\) has rank \(K\), there is an \(K\times K\) minor of the Jacobian matrix which is nonzero at \(\Parameters_0\). After a permutation of the coordinates, assume that
$$
\det
\left[
\frac{\partial h_i}{\partial \phi_j}(\Parameters_0)
\right]_{i,j=1}^K
\neq0 \, .
$$

Fix the remaining coordinates at their values at \(\Parameters_0\), and define the map
$
\boldg:\Complex^K\to\Complex^K
$
by
$
\boldg(\phi_1,\ldots,\phi_K)
=
\boldh(\phi_1,\ldots,\phi_K,\phi_{0,K+1},\ldots,\phi_{0,N}).
$
Then
$
\det \boldJ_{\boldg}(\Parameters_0)\neq0.
$
By the holomorphic inverse function theorem, there exist
neighborhoods \(\Vcal\) of
$
(\phi_{0,1},\ldots,\phi_{0,K})
$
and \(\Wcal\) of
$
\boldy_0=\boldh(\Parameters_0)
$
such that
$
\boldg:\Vcal \to \Wcal
$
is biholomorphic. 
Hence
$
\Wcal=\boldg(\Vcal)\subseteq \boldh \left (\Complex^N \right ).
$
Thus \(\boldh \left (\Complex^N \right )\) contains the nonempty Euclidean-open set \(\Wcal\).
\end{proof}

\subsection{Step 2: An Euclidean open image is Zariski dense}
\begin{lemma}
\label{lem:zariski-dense}
Let \(\Ucal\subseteq\Complex^K\) contain a nonempty Euclidean-open set $\Wcal$. Then $\overline{\Ucal}^{\,\mathrm{Zar}}=\Complex^K.$
\end{lemma}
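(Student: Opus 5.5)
We argue by contradiction, using the equivalent characterization of Zariski density given in the definition of dominant maps: it suffices to show that the only polynomial vanishing identically on $\Ucal$ is the zero polynomial. Indeed, $\overline{\Ucal}^{\,\mathrm{Zar}}$ is the common zero set of all polynomials vanishing on $\Ucal$. If the only such polynomial is $0$, then this zero set is all of $\Complex^K$.

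Suppose that $p\in\Complex[y_1,\ldots,y_K]$ vanishes on $\Ucal$. Then in particular $p$ vanishes on the nonempty Euclidean-open set $\Wcal\subseteq\Ucal$. Pick a point $\boldy_0\in\Wcal$ and a polydisc $\Dcal=\{\boldy:\,|y_k-y_{0,k}|<r,\ k=1,\ldots,K\}\subseteq\Wcal$, which exists because $\Wcal$ is open. We expand $p$ around $\boldy_0$ as a finite Taylor sum
$$p(\boldy)=\sum_{\boldsymbol{\alpha}} c_{\boldsymbol{\alpha}}\,(\boldy-\boldy_0)^{\boldsymbol{\alpha}}, \qquad c_{\boldsymbol{\alpha}}=\frac{\partial^{\boldsymbol{\alpha}}p(\boldy_0)}{\boldsymbol{\alpha}!} .$$
Since $p\equiv 0$ on $\Dcal$, every partial derivative of $p$ also vanishes on $\Dcal$, and in particular at $\boldy_0$. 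Hence all coefficients $c_{\boldsymbol{\alpha}}$ vanish and $p$ is the zero polynomial.

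To avoid relying on the complex-analytic identity theorem, the same conclusion can be reached by an elementary induction on $K$. For $K=1$, a nonzero univariate polynomial has finitely many roots, while a disc contains infinitely many points. For the inductive step, write $p=\sum_j q_j(y_1,\ldots,y_{K-1})\,y_K^{\,j}$. For each fixed $(y_1,\ldots,y_{K-1})$ in the projected polydisc, the univariate polynomial in $y_K$ vanishes on a disc, so every $q_j$ vanishes there. The induction hypothesis then gives $q_j\equiv 0$ for every $j$.

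Having shown that no nonzero polynomial vanishes on $\Ucal$, the Zariski closure is the zero set of the zero polynomial, namely $\overline{\Ucal}^{\,\mathrm{Zar}}=\Complex^K$. No step here is a serious obstacle. The only point requiring care is justifying that a polynomial vanishing on an open set is identically zero, and the induction above settles this cleanly using the polydisc structure of the open set.
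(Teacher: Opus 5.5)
Your proof is correct and follows the same route as the paper: any polynomial vanishing on $\Ucal$ vanishes on the open set $\Wcal$, hence is identically zero, so the Zariski closure is all of $\Complex^K$. The only difference is that you justify the fact that a polynomial vanishing on a nonempty Euclidean-open set is the zero polynomial, either through the vanishing Taylor coefficients at a point or by induction on $K$ over a polydisc, whereas the paper simply asserts it.
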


\begin{proof}
Suppose that a polynomial $p$
vanishes on \(\Ucal\). Since \(\Ucal\) contains a nonempty Euclidean-open set \(\Wcal\), we have
$p(\Parameters)=0, \, \forall \Parameters\in \Wcal$. 
A polynomial in several complex variables which vanishes on a
nonempty Euclidean-open set must vanish identically. Hence $p\equiv0$.
Therefore, there is no nonzero polynomial whose zero set contains \(\Ucal\). By definition of the Zariski closure,
$
\overline{\Ucal}^{\,\mathrm{Zar}}=\Complex^K.
$
\end{proof}

\subsection{Step 3: Chevalley's theorem and generic surjectivity}


\begin{theorem}[Chevalley's theorem]
Let $\boldh:\Acal\to \Ucal $ be a polynomial or, more generally, a regular map of algebraic varieties. Then the image of every constructible subset of \(\Acal\) is a constructible subset of \(\Ucal\).
In particular, the image \(\boldh(\Acal)\) is constructible \cite{Hartshorne:B77}.
\end{theorem}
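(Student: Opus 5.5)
The plan is to reduce Chevalley's theorem to a single geometric fact about dominant maps, and then finish by Noetherian induction on the closure of the image. I will work over $\Complex$ with affine varieties and polynomial maps, which is all that is needed for $\boldh:\Complex^N\to\Complex^K$.

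\emph{Reductions.} Images commute with finite unions, so it suffices to treat one locally closed set $\Ucal\cap\Zcal$. This set is covered by finitely many basic open sets $\{f\neq 0\}\cap\Zcal$. Each of these is itself an affine variety, namely the closed set $\{(\boldx,y): \boldx\in\Zcal,\ y f(\boldx)=1\}$ in one more variable. Hence I only need: \emph{the image of an affine variety $X$ under a polynomial map $\boldh:X\to\Complex^K$ is constructible.} Decomposing $X$ into irreducible components, I may also assume $X$ irreducible.

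\emph{Key lemma (main obstacle).} Let $Y=\overline{\boldh(X)}^{\,\mathrm{Zar}}$, which is irreducible, so $\boldh:X\to Y$ is dominant. I claim $\boldh(X)$ contains a nonempty Zariski-open subset of $Y$.
\begin{itemize}
\item Dominance makes the coordinate ring $A(Y)$ inject into $A(X)$.
\item I would apply Noether normalization over the fraction field of $A(Y)$. After inverting a single nonzero $b\in A(Y)$, this gives elements $t_1,\ldots,t_d\in A(X)_b$, algebraically independent over $A(Y)_b$, such that $A(X)_b$ is integral over $A(Y)_b[t_1,\ldots,t_d]$. Concretely, clearing the denominators of the finitely many integrality equations of the generators of $A(X)$ produces $b$.
\item Integral ring extensions satisfy lying-over, so every maximal ideal of $A(Y)_b[t_1,\ldots,t_d]$ lies under one of $A(X)_b$.
\item By the Nullstellensatz, every point of $\{b\neq 0\}\subseteq Y$ extends to a point of $\Complex^d$ over it, and hence has a preimage in $X$.
\end{itemize}
So $\{b\neq0\}\cap Y\subseteq\boldh(X)$. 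Getting the normalization uniformly over an open set of the base, i.e.\ generic freeness, is the delicate step. If it proves awkward, I would fall back on eliminating one variable at a time: factor the map through its graph followed by coordinate projections, and control the projection $\Complex^{m+1}\to\Complex^m$ via leading coefficients and resultants.

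\emph{Noetherian induction.} Let $V=\{b\neq0\}\cap Y$ be the open set from the lemma. Then $Y\setminus V$ is a proper closed subset of $Y$, and
\[
\boldh(X)=V\cup \boldh\bigl(\boldh^{-1}(Y\setminus V)\bigr),
\]
where $\boldh^{-1}(Y\setminus V)$ is a closed subvariety of $X$. Its image has Zariski closure strictly smaller than $Y$. Induction on the closure of the image, using the descending chain condition on closed sets (Hilbert's basis theorem), therefore shows that this second piece is constructible. Hence $\boldh(X)$ is constructible, and so is $\boldh(\Acal)$ in particular.
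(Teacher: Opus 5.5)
The paper gives no proof of this statement. Chevalley's theorem is imported as a black box from Hartshorne and is used only for the polynomial map $\boldh:\Complex^N\to\Complex^K$ in the proof of Theorem~\ref{thm:Sufficient}.

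Your proposal is a correct, self-contained proof of exactly that affine, polynomial case. It follows the standard outline: reduce to the image of an irreducible affine variety, show that a dominant map covers a nonempty open subset of the closure of its image, and finish by Noetherian induction. Hartshorne's exercise on Chevalley's theorem proves the key step by extending homomorphisms into an algebraically closed field one generator at a time; your resultant fallback is essentially that route. You instead get the key step directly from Noether normalization over $\mathrm{Frac}\,A(Y)$, clearing finitely many denominators into $b$, then lying-over and the Nullstellensatz. That step is already complete as you wrote it, so the fallback is not needed. Also, what you use is generic finiteness (integrality after inverting $b$), not generic freeness.

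The citation buys brevity and the full generality of arbitrary regular maps. Your argument buys transparency: it shows exactly where polynomiality enters, namely finite generation of $A(X)$ over $A(Y)$. This is the same point the paper's closing remark makes when it contrasts polynomial maps with general holomorphic ones.

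Two bookkeeping points if you want your proof to match the statement as written:
\begin{itemize}
\item For regular maps between arbitrary varieties, add the routine reduction to the affine case. Cover $\Ucal$ by finitely many affine opens, and each preimage by finitely many affine opens; this works because constructibility is local on the target.
\item The set $\boldh^{-1}(Y\setminus V)$ is generally reducible. So state the inductive hypothesis for arbitrary (possibly reducible) affine $X$ whose image has Zariski closure inside a given closed set, and redo the split into irreducible components inside the inductive step.
\end{itemize}
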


\begin{lemma}
\label{lem:constructible-dense}
Let \(\Ccal \subseteq\Complex^K\) be constructible and suppose that $\overline{\Ccal}^{\,\mathrm{Zar}}=\Complex^K$ (i.e., it is Zariski dense). 
Then \(\Ccal\) contains a nonempty Zariski-open subset of \(\Complex^K\).
\end{lemma}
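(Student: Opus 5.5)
We write the constructible set as a finite union of locally closed pieces and use the irreducibility of $\Complex^K$ to force one piece to be an open set. By the definition of constructible set, write $\Ccal=\bigcup_{i=1}^r (\Ucal_i\cap \Zcal_i)$, where each $\Ucal_i$ is Zariski open and each $\Zcal_i$ is Zariski closed. We may discard the pieces with $\Ucal_i\cap\Zcal_i=\emptyset$, so assume every piece is nonempty.

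The first step is to take Zariski closures. Since each piece lies in the closed set $\Zcal_i$, we get $\Complex^K=\overline{\Ccal}^{\,\mathrm{Zar}}\subseteq \bigcup_{i=1}^r \Zcal_i$. A finite union of closed sets is closed, so the closure of the union lies in it.

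The second step uses the irreducibility of $\Complex^K$ in the Zariski topology. Irreducibility is the precise form of the remark in the Appendix that any two nonempty Zariski-open sets intersect. Suppose every $\Zcal_i$ were a proper closed subset. Then each complement $\Complex^K\setminus\Zcal_i$ would be a nonempty open set. By irreducibility, a finite intersection of nonempty open sets is nonempty, so some point avoids all the $\Zcal_i$, contradicting the covering. Hence some $\Zcal_{i_0}=\Complex^K$.

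For this step to be rigorous I would justify irreducibility directly. If $\Complex^K=V(p)\cup V(q)$ with both zero sets proper, then $pq$ vanishes identically on $\Complex^K$. Since the polynomial ring $\Complex[x_1,\ldots,x_K]$ is an integral domain, $pq$ vanishing everywhere means $pq=0$, so $p=0$ or $q=0$. That contradicts properness. For general ideals, one can pick a nonzero polynomial from each ideal defining a proper closed set and apply the same argument to their product. This is the only delicate point; everything else is bookkeeping.

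Finally, with $\Zcal_{i_0}=\Complex^K$, the piece $\Ucal_{i_0}\cap\Zcal_{i_0}=\Ucal_{i_0}$ is a nonempty Zariski-open set contained in $\Ccal$. This proves the lemma.

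In the next step of the paper, the lemma is combined with Chevalley's theorem (taking $\Ccal=\boldh(\Complex^N)$) and Lemma~\ref{lem:zariski-dense}. The complement of the resulting open set is a proper algebraic subset, hence of Lebesgue measure zero and nowhere dense in the Euclidean topology, which yields \eqref{eq:UA1}.
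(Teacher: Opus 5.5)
Your proof is correct and follows the same route as the paper. Both decompose $\Ccal$ into locally closed pieces, take Zariski closures to get $\Complex^K\subseteq\bigcup_i \Zcal_i$, use irreducibility of $\Complex^K$ to force some $\Zcal_{i_0}=\Complex^K$, and conclude $\Ucal_{i_0}\subseteq\Ccal$; your preliminary removal of empty pieces guarantees $\Ucal_{i_0}\neq\emptyset$ (which the paper leaves implicit), and your product-of-polynomials argument supplies the irreducibility that the paper only asserts.
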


\begin{proof}
Since \(\Ccal\) is constructible, it is a finite union of locally closed
sets:
\begin{equation}
\Ccal=\bigcup_{k=1}^K \left (\Ucal_k\cap \Zcal_k \right)
\end{equation}
with \(\Ucal_k\) Zariski open and \(\Zcal_k\) Zariski closed.

Since \(\Ccal\) is Zariski dense, it is
\begin{equation}
\Complex^K= \overline{\Ccal}^{\,\mathrm{Zar}}
\subseteq \bigcup_{k=1}^K \Zcal_k.
\end{equation}

The affine space \(\Complex^K\) is irreducible, so it cannot be a
finite union of proper Zariski-closed subsets. Hence $\Zcal_{k_0}=\Complex^K$ 
for some \(k_0\). Consequently, $\Ucal_{k_0}\cap \Zcal_{k_0}=\Ucal_{k_0} \subseteq \Ccal.$ 
Thus \(\Ccal\) contains a nonempty Zariski-open subset.
\end{proof}

An interesting property is given by the following lemma 
 \begin{lemma} \label{lem:dense} Let \(\Ucal \subset\Complex ^K\) be a nonempty Zariski-open set. Then \(\Ucal\) is dense in the Euclidean topology. Moreover, \(\Complex^K\setminus \Ucal\) has Lebesgue measure zero, i.e., $\operatorname{Leb}_{2K} \left ( \Complex^K\setminus \Ucal\right )=0$.
 \end{lemma}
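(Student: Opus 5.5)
The plan is to reduce everything to the complement $\Zcal=\Complex^K\setminus\Ucal$. Since $\Ucal$ is nonempty and Zariski open, $\Zcal$ is a \emph{proper} Zariski-closed set. By definition, $\Zcal$ is the common zero set of a collection of polynomials. Because $\Zcal\neq\Complex^K$, at least one polynomial $p$ in this collection is not identically zero. Hence $\Zcal\subseteq V(p)=\{\boldz\in\Complex^K: p(\boldz)=0\}$. It therefore suffices to prove both claims for the complement of the hypersurface $V(p)$ of a single nonzero polynomial $p$. Indeed, $\Complex^K\setminus V(p)\subseteq\Ucal$, so density of $\Complex^K\setminus V(p)$ gives density of $\Ucal$, and $\operatorname{Leb}_{2K}(V(p))=0$ gives $\operatorname{Leb}_{2K}(\Zcal)=0$ by monotonicity.

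For density, I will argue by contradiction using the fact already invoked in the proof of Lemma \ref{lem:zariski-dense}: a polynomial vanishing on a nonempty Euclidean-open set vanishes identically. Suppose $\Complex^K\setminus V(p)$ were not Euclidean dense. Then some nonempty open ball $\Wcal$ would be contained in $V(p)$, so $p$ would vanish on $\Wcal$, forcing $p\equiv 0$, which is a contradiction. Hence $V(p)$ has empty interior, and its complement, and therefore $\Ucal$, is Euclidean dense.

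For the measure statement, I will proceed by induction on $K$, using Fubini--Tonelli on $\Complex^K\cong\Complex\times\Complex^{K-1}$.
\begin{itemize}
\item For $K=1$, a nonzero polynomial in one complex variable has finitely many roots, so $V(p)$ is finite and has $\operatorname{Leb}_2$-measure zero.
\item For the inductive step, write
\begin{equation*}
p(z_1,\boldz')=\sum_{j=0}^{d} c_j(\boldz')\, z_1^j ,
\end{equation*}
where $c_d\not\equiv 0$ is a polynomial in $\boldz'\in\Complex^{K-1}$. By the inductive hypothesis, $E=V(c_d)\subseteq\Complex^{K-1}$ has $\operatorname{Leb}_{2K-2}(E)=0$.
\item For each $\boldz'\notin E$, the slice $\{z_1: p(z_1,\boldz')=0\}$ is the root set of a nonzero univariate polynomial. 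It is therefore finite and has $\operatorname{Leb}_2$-measure zero.
\item Since $V(p)$ is closed, hence Borel, Tonelli's theorem applies to its indicator function:
\begin{equation*}
\operatorname{Leb}_{2K}(V(p))=\int_{\Complex^{K-1}}\operatorname{Leb}_2\big(V(p)_{\boldz'}\big)\,d\boldz' .
\end{equation*}
The integrand vanishes outside the null set $E$, so the integral is zero.
\end{itemize}

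The main obstacle is this measure-zero step. Specifically, it requires choosing the variable decomposition so that the leading coefficient $c_d$ is a nonzero polynomial, which is always possible by taking $d$ to be the highest power of $z_1$ that actually appears. It also requires making sure the slices over the exceptional set $E$ are handled only through the null measure of $E$. The density claim could alternatively be deduced from the measure claim, since a set whose complement has measure zero cannot miss a nonempty open ball. However, I plan to keep the direct argument based on Lemma \ref{lem:zariski-dense}, because it is self-contained.
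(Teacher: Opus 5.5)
Your proof is correct, and it takes a genuinely different route from the paper. The paper gives no argument for this lemma and simply defers to \cite{GortzWedhorn:B10,Kra:B01}. Presumably those references supply more general facts of this kind, for instance that the zero set of any holomorphic function not identically zero is nowhere dense and Lebesgue-null, proved by local several-complex-variables arguments.

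You instead exploit polynomiality directly. Your reduction $\Zcal\subseteq V(p)$ to a single nonzero polynomial $p$ makes explicit the one algebraic step the citation hides. Density then follows from the same identity principle already invoked in Lemma~\ref{lem:zariski-dense}. The null-measure claim becomes a clean global induction on $K$ via Tonelli, in which the leading coefficient $c_d$ confines all degenerate fibres to the null set $E$. Your handling of $E$ is sound even when $d=0$: there the fibres over $E$ are all of $\Complex$ and the integrand is $+\infty$ on $E$, but an integrand supported on a null set still integrates to zero.

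The citation buys brevity and a statement valid beyond polynomial zero sets. That extra generality is not needed here, since Zariski-closed subsets of the target $\Complex^K$ are always cut out by polynomials. Your version buys a self-contained appendix that relies only on Fubini--Tonelli and the identity principle. As you observe, the density part could even be dropped, because a set whose complement is Lebesgue-null is automatically Euclidean dense.
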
 \begin{proof} See \cite{GortzWedhorn:B10,Kra:B01}. \end{proof}

We now have all the necessary ingredients to prove Theorem \ref{thm:Sufficient}.

\begin{proof}[Proof of Theorem \ref{thm:Sufficient}]
\label{thm:ae-surjectivity}
Let
\begin{equation}
\boldh:\Complex^N\longrightarrow\Complex^K,
\qquad N\geq K,
\end{equation}
be a polynomial map, with $K=MS$, and that the assumptions of Theorem \ref{thm:Sufficient} are satisfied. 

By Lemma~\ref{lem:open-image}, the image
\(\boldh\left (\Complex^N \right )\) contains a nonempty Euclidean-open set. By Lemma~\ref{lem:zariski-dense}, this implies
\begin{equation}
\overline{\boldh\left (\Complex^N \right )}^{\,\mathrm{Zar}}=\Complex^K \, .
\end{equation}
Hence \(\boldh\) is dominant. Since \(\boldh\) is polynomial, 
by Chevalley's theorem $\boldh\left (\Complex^N \right )$
is constructible. Since it is also Zariski dense, Lemma~\ref{lem:constructible-dense} implies that there exists a nonempty Zariski-open subset  $\Ucal\subseteq\Complex^K$
such that $\Ucal\subseteq \boldh \left (\Complex^N \right)$. 
Therefore, by Lemma \ref{lem:dense}, it follows that $\Ucal=\boldh\left (\Complex^N \right )$ is dense and 
\begin{equation}
\label{eq:Leb}
\operatorname{Leb}_{2M}
\left(
\Complex^K\setminus \boldh \left (\Complex^N \right )
\right)=0 \, .
\end{equation}
In other words, \(\boldh\) is surjective almost everywhere in
\(\Complex^K\).
\end{proof}

\begin{remark}
Summarizing, the full-rank
condition at one point gives local openness of the image. This implies
Zariski density, but does not by itself give almost-everywhere
surjectivity for a general holomorphic map.
Polynomiality is used in the second part of the proof: it makes
\(\boldh\) an algebraic morphism, so that Chevalley's theorem implies that
the image is constructible. A constructible Zariski-dense subset of
\(\Complex^K\) necessarily contains a nonempty Zariski-open subset,
whose complement has Lebesgue measure zero. 
It is worth noticing that the same result holds also for $\boldh$ being a rational map provided that $\Complex^N$ is substituted with the actual domain of $\boldh$.
\end{remark}

\ifCLASSOPTIONcaptionsoff
\fi
\bibliographystyle{IEEEtran}

\bibliography{IEEEabrv,Biblio/BiblioDD,Biblio/MetaSurfaces,Biblio/EMInformationTheory,Biblio/IntelligentSurfaces,Biblio/MassiveMIMO,Biblio/MIMO,Biblio/THzComm,Biblio/EMTheory,Biblio/WINS-Books,Biblio/Vari}

\end{document}

%% file: acronyms.tex
\acrodef{ESPAR}{electrically steerable passive array radiator}

\acrodef{RA}{reconfigurable antenna}

\acrodef{ZF}{zero forcing}

\acrodef{SIM}{stacked intelligent metasurface}

\acrodef{ISAC}{integrated sensing and communication}

\acrodef{SE}{spectral efficiency}

\acrodef{DAC}{digital-to-analog converter}

\acrodef{DMA}{dynamic metasurface antenna}

\acrodef{SINR}{signal-to-interference noise ratio}

\acrodef{ESIT}{electromagnetic signal and information theory} 

\acrodef{ELAA}{extremely large antenna arrays} 

\acrodef{DSA}{dynamic scattering array}

\acrodef{ULA}{uniform linear array}

\acrodef{UCA}{uniform circolar array}

\acrodef{IIoT}{industrial Internet-of-things}

\acrodef{IT}{information theory}

\acrodef{SRE}{smart radio environment}

\acrodef{EMO}{electromagnetic object}

\acrodef{SVD}{singular value decomposition}

\acrodef{PSWF}{prolate spheroidal wave function}

\acrodef{CR}{channel response}

\acrodef{BS}{base station}

\acrodef{MS}{mobile station}

\acrodef{UE}{user equipment}

\acrodef{MIMO}{multiple-input multiple-output}

\acrodef{MISO}{multiple-input single-output}

\acrodef{RIS}{reconfigurable intelligent surface}

\acrodef{IRS}{intelligent reconfigurable surface}

\acrodef{LIS}{large intelligent surface}

\acrodef{MIS}{medium intelligent surface}

\acrodef{SIS}{small intelligent surface}

\acrodef{DoF}{degrees-of-freedom}

\acrodef{AF}{amplify \& forward}

\acrodef{DF}{detect \& forward}

\acrodef{JF}{just forward}

\acrodef{CSI}{channel state information}

\acrodef{RV}{random variable}

\acrodef{i.i.d.}{independent, identically distributed}

\acrodef{PSD}{power spectral density}

\acrodef{PDF}{probability distribution function}

\acrodef{CDF}{cumulative distribution function}

\acrodef{ch.f.}{characteristic function}

\acrodef{AWGN}{additive white Gaussian noise}

\acrodef{RSSI}{received signal strength indicator}

\acrodef{SNR}{signal-to-noise ratio}

\acrodef{LRT}{likelihood ratio test}

\acrodef{GLRT}{generalized likelihood ratio test}

\acrodef{GML}{generalized maximum likelihood}

\acrodef{LOS}{line-of-sight}

\acrodef{NLOS}{non-line-of-sight}

\acrodef{GDOP}{geometric dilution of precision}

\acrodef{GPS}{Global Positioning System}

\acrodef{FIM}{Fisher information matrix}

\acrodef{PEB}{position error bound}

\acrodef{WSN}{Wireless Sensor Network}

\acrodef{MAC}{medium access control}

\acrodef{RSS}{received signal strength}

\acrodef{RTT}{round-trip time}

\acrodef{MIMO}{multiple-input multiple-output}

\acrodef{MF}{matched filter}

\acrodef{ED}{energy detector}

\acrodef{ML}{maximum likelihood}

\acrodef{NL}{nonlinear}

\acrodef{MSE}{mean square error}

\acrodef{RMSE}{root mean square error}

\acrodef{ppm}{part-per-million}

\acrodef{PRP}{pulse repetition period}

\acrodef{ACK}{acknowledge}

\acrodef{UWB}{ultrawide bandwidth}

\acrodef{TNR}{threshold-to-noise ratio}

\acrodef{LOS}{line-of-sight}

\acrodef{LS}{least squares}

\acrodef{IR-UWB}{impulse radio UWB}

\acrodef{FCC}{Federal Communications Commission}

\acrodef{TH}{time-hopping}

\acrodef{PPM}{pulse position modulation}

\acrodef{PAM}{pulse amplitude modulation}

\acrodef{MUI}{multi-user interference}

\acrodef{PDP}{power delay profile}

\acrodef{PPP}{Poisson point process}

\acrodef{DS}{delay spread}

\acrodef{CED}{channel excess delay}

\acrodef{BPZF}{band-pass zonal filter}

\acrodef{SIR}{signal-to-interference ratio}

\acrodef{RFID}{radio frequency identification}

\acrodef{WPAN}{wireless personal area networks}

\acrodef{WWLB}{Weiss-Weinstein lower bound}

\acrodef{DP}{direct path}

\acrodef{MF}{matched filter}

\acrodef{MMSE}{minimum-mean-square-error}

\acrodef{SBS}{serial backward search}

\acrodef{NBI}{narrowband interference}

\acrodef{WBI}{wideband interference}

\acrodef{INR}{interference-to-noise ratio}

\acrodef{CIR}{channel impulse response}

\acrodef{ISI}{inter-symbol interference}

\acrodef{CPR}{channel pulse response}

\acrodef{LRT}{likelihood ratio test}

\acrodef{MUI}{multi-user interference}

\acrodef{EM}{electromagnetic}

\acrodef{CW}{continuous wave}

\acrodef{RF}{radiofrequency}

\acrodef{FCC}{Federal Communications Commission}

\acrodef{EIRP}{effective radiated isotropic power}

\acrodef{RCS}{radar cross section}

\acrodef{BAV}{balanced antipodal Vivaldi}

\acrodef{PRake}{partial Rake}

\acrodef{RTLS}{real time locating system}

\acrodef{CRB}{Cram\'{e}r-Rao bound}

\acrodef{ZZB}{Ziv-Zakai bound}

\acrodef{TOA}{time-of-arrival}

\acrodef{TOF}{time-of-flight}

\acrodef{WSN}{wireless sensor network}

\acrodef{MAC}{medium access control}

\acrodef{RSS}{received signal strength}

\acrodef{TDOA}{time difference-of-arrival}

\acrodef{RF}{radiofrequency}

\acrodef{RTT}{round-trip time}

\acrodef{AOA}{angle-of-arrival}

\acrodef{MF}{matched filter}

\acrodef{ED}{energy detector}

\acrodef{ML}{maximum likelihood}

\acrodef{MUR}{Multistatic radar}

\acrodef{HDSA}{high-definition situation-aware}

\acrodef{RRC}{root raised cosine}

\acrodef{OFDM}{orthogonal frequency division multiplexing}

\acrodef{IF}{intermediate frequency}

\acrodef{PHY}{physical layer}

\acrodef{S-V}{Saleh-Valenzuela}

\acrodef{UHF}{ultra-high frequency}

\acrodef{PR}{pseudo-random}

\acrodef{SoC}{System on Chip}

\acrodef{SoP}{System on Package}

\acrodef{SPMF}{Single-Path Matched Filter}

\acrodef{IMF}{Ideal Matched Filter}

\acrodef{SCR}{signal-to-clutter ratio}

\acrodef{BEP}{bit error probability}

\acrodef{BER}{bit error rate}

\acrodef{WSR}{wireless sensor radar}

\acrodef{HPBW}{half power beam width}

\acrodef{LEO}{localization error outage}

\acrodef{WSS}{wide-sense stationary}

\acrodef{TR}{time-reversal}

\acrodef{WSSUS}{WSS with uncorrelated scattering}

\acrodef{GP}{Gaussian process}

\acrodef{IMU}{inertial measurement unit}


%% file: def_EM_SP.tex
\newtheorem{theorem}{Theorem}[section]
\newtheorem{lemma}[theorem]{Lemma}
\newtheorem{proposition}[theorem]{Proposition}
\newtheorem{corollary}[theorem]{Corollary}
\newtheorem{definition}[theorem]{Definition}

\newcommand{\rank}[1]{{\rm rank} \left \{ #1 \right \}}

\newcommand{\diag}[1]{{\rm diag} \left \{ #1 \right \}}

\newcommand{\boldA} {{\bf{A}}}
\newcommand{\boldg} {{\bf{g}}}

\newcommand{\bolde} {{\bf{e}}}

\newcommand{\boldW} {{\bf{W}}}

\newcommand{\boldJ} {{\bf{J}}}
\newcommand{\boldH} {{\bf{H}}}

\newcommand{\boldB} {{\bf{B}}}

\newcommand{\boldI} {{\bf{I}}}

\newcommand{\boldy} {{\bf{y}}}
\newcommand{\boldh} {{\bf{h}}}

\newcommand{\Ucal} {\mathcal{U}}
\newcommand{\Acal} {\mathcal{A}}
\newcommand{\Ccal} {\mathcal{C}}
\newcommand{\Wcal} {\mathcal{W}}
\newcommand{\Vcal} {\mathcal{V}}
\newcommand{\Zcal} {\mathcal{Z}}

\newcommand{\Complex} {\mathbb{C}}

\newcommand{\Ptx} {P_{\text{T}}}

\newcommand{\Parameters} {\boldsymbol{\varphi}}

\newcommand{\Parametersl} {\boldsymbol{\varphi}_{\ell}}
\newcommand{\Parameterlm} {\varphi_{\ell,m}}